\newtheorem{theorem}{Theorem}[section]
\newtheorem{lemma}[theorem]{Lemma}
\newtheorem{proposition}[theorem]{Proposition}
\theoremstyle{definition}
\theoremstyle{remark}
\def \cA {\mathcal{A}}
\def \cG {\mathcal{G}}
\def \cQ {\mathcal{Q}}
\def \a {\alpha}
\def \b {\beta}
\def \d {\delta}
\def \e {\varepsilon}
\def \k {\kappa}
\def \l {\lambda}
\def \D {\Delta}
\def \N {\mathbb{N}}
\def \R {\mathbb{R}}
\def \lra {\longrightarrow}
\def \al {\mathcal{A}^\ell}
\def \zl {\lbrace\,0,\dots,\ell\,\rbrace}
\def \exa {e^{-a}}
\def \hyperc {\lbrace\,0,1\,\rbrace^\ell}
\begin{document}

\begin{center}
\begin{Large}
Asymptotic behavior of Eigen's quasispecies model
\end{Large}

\begin{large}
Joseba Dalmau

\vspace{-12pt}
École Polytechnique

\vspace{4pt}
\today
\end{large}

\begin{abstract}
\noindent
We study Eigen's quasispecies model
in the asymptotic regime 
where the length of the genotypes goes to $\infty$
and the mutation probability goes to 0.
A limiting infinite system of differential equations is obtained.
We prove the convergence of the trajectories,
as well as the convergence of the 
equilibrium solutions.
We give the analogous results for a discrete--time version
of Eigen's model, which coincides with a model 
proposed by Moran.
\end{abstract}
\end{center}

\section{Introduction}\label{Intro}
In the early 70s, 
Manfred Eigen proposed a mathematical model 
for the evolution of a prebiotic population,
under the complementary forces of selection
and mutation~\cite{Eig}.
Let $\cG$ be the finite set of possible genotypes,
along with a fitness function $f:\cG\lra\R^+$
and a mutation matrix $(M(u,v),u,v\in\cG)$.
The concentration $x_v(t)$ of individuals
having genotype $v$ in the population,
evolves according to the differential equation
$$x_v'(t)\,=\,\sum_{u\in\cG}x_u(t)f(u)M(u,v)
-x_v(t)\sum_{u\in\cG}x_u(t)f(u)\,.$$
Under the assumption that the matrix
$W=(f(u)M(u,v),u,v\in\cG)$ is primitive,
it is well known~\cite{JER76,TM74,SS82} that
the above system of differential equations
has a unique stationary solution $x^*$
and all the trajectories converge to $x^*$.
Eigen's model exhibits two phenomena of particular importance:
an error threshold phenomenon and a quasispecies distribution.
In order to see what this means, 
let us fix $\cG$ to be the $\ell$--dimensional hypercube $\hyperc$,
and suppose that mutations arrive 
independently on each site of a chain with probability $q$.
When the length of the genotypes $\ell$ tends to infinity,
there exists a critical mutation probability $q^*$,
called the error threshold,
separating two different regimes.
For mutation probabilities above the error threshold,
the population at equilibrium is totally random.
For mutation probabilities below the error threshold,
the population at equilibrium possesses a positive 
concentration of the fittest genotype,
along with a cloud of mutants which are a few mutations away from
the fittest genotype.
This kind of distribution is referred to as a quasispecies.
Explicit formulas for the distribution of the quasispecies
have been found in~\cite{CDClassdep},
in the case of the sharp peak landscape,
as well as in the case of class--dependent fitness landscapes.
In order to obtain these formulas, the following asymptotic regime
is considered:
$$\ell\to\infty\qquad
q\to 0\qquad 
\ell q\to a\in \,]0,+\infty[\,\,.$$
In this asymptotic regime, 
an infinite version of Eigen's system of differential equations
$(x^\infty)'(t)=F(x^\infty(t))$ 
is obtained,
and the equations for the stationary solutions of 
the infinite system are explicitly solved.
The solutions $\cQ(f,a)$ are called quasispecies distributions,
and depend both on the fitness function $f$  and the mean number of
mutations per genome per generation $a$.
The aim of this paper is to complete the picture 
by showing the following convergences:
$$
\begin{tikzpicture}[scale=1.5]
\node (A) at (0,2) {$x(t)$};
\node (B) at (2,2) {$x^*$};
\node (C) at (0,0) {$x^\infty(t)$};
\node (D) at (2,0) {$\cQ(f,a)$};
\path[->,font=\normalsize,>=angle 90]
(A) edge node[above]{$t\to\infty$} (B)
(A) edge node[left]{$\genfrac{}{}{0pt}{1}{\ell\to\infty,\,
q\to 0}
{{\ell q} \to a}$} (C)
(B) edge node[right]{$\genfrac{}{}{0pt}{1}{\ell\to\infty,\,
q\to 0}
{{\ell q} \to a}$} (D)
(C) edge node[above]{$t\to\infty$} (D);
\end{tikzpicture}$$
We also obtain analogous results for a closely related model.
We keep the same framework as in Eigen's model,
but we consider the time to be discrete.
The concentration $x_v(n)$ of individuals having genotype $v$
evolves according to the following dynamical system:
$$x_v(n+1)\,=\,\frac{\displaystyle\sum_{u\in\cG}x_u(n)f(u)M(u,v)}
{\displaystyle\sum_{u\in\cG}x_u(n)f(u)}\,.$$
This model was first proposed by Moran~\cite{Moran76},
and it is not to be confused with the well--known Moran model,
which is a stochastic model for the evolution of a finite population.
We thus call this model the deterministic Moran model.
We will show that a similar diagram holds for the deterministic Moran model.
In particular, the unique fixed point of the above dynamical system 
is again $x^*$, and the distribution of the quasispecies found in the limit
coincides with the distribution of the quasispecies for Eigen's model.

\section{Models and known results}\label{Modkres}
Let $\cA$ be a finite alphabet of cardinality $\k$, and let
$\cA^\ell$ be the set of sequences of length $\ell\geq1$ over $\cA$.
We will refer to $\al$ as the set of genotypes;
typical choices for $\cA$ are $\lbrace\,A,T,G,C\,\rbrace$
for DNA sequences, the set of the twenty amino acids for proteins,
or $\lbrace\,0,1\,\rbrace$ for binary sequences.
The models we consider aim at modeling the evolution
of a population, the individuals in the population having genotypes in $\al$.
The evolution will be guided by two main forces,
selection and mutation.
Selection is defined via a fitness function, that is,
a mapping $g:\al\lra\,]0,+\infty[$.
Mutations arrive independently on each site of the genotype,
with probability $q\in(0,1)$; when a mutation occurs in a certain site, 
the letter present in it is replaced by one of the $\k-1$ remaining letters in the 
alphabet, chosen uniformly at random.
The natural distance on $\al$ is the Hamming distance,
which counts the number of different digits between two chains, i.e.,
$$\forall\,u,v\in\al\qquad
d(u,v)\,=\,\text{card}\big\lbrace\,1\leq i\leq\ell:
u(i)\neq v(i)\,\big\rbrace\,.$$
The probability that a genotype $u$ is transformed into a genotype $v$
by mutation is given by
$$Q(u,v)\,=\,\Big(
\frac{q}{\k-1}
\Big)^{d(u,v)}(1-q)^{\ell-d(u,v)}\,.$$
We make the following two assumptions on  the fitness function $g$.

\textbf{Assumptions.} We suppose that: \\
%(A1) The function $g$ is strictly positive.\\
(A1) There exists a privileged sequence, $w^*\in\al$
of strictly maximal fitness, that is, $g(w^*)>g(u)$ 
for all $u\in\al\setminus\lbrace\,w^*\,\rbrace$.\\
(A2) All sequences at a same distance from $w^*$ share the same fitness.

The privileged sequence $w^*$ will be referred to as the master sequence.
Under these two assumptions, we can decompose the space of genotypes $\al$
into Hamming classes with respect to the master sequence.
We say that a genotype  $u\in\al$ belongs to the Hamming class $k$
if $d(u,w^*)=k$. The set of Hamming classes is $\zl$ and, under assumption
(A2), we can define a fitness function $f:\zl\lra\,]0,+\infty[\,$ by setting 
$f(k)$ to be the value of $g(u)$ common to all the sequences $u$ in the Hamming class $k$.
Moreover, the mutation matrix $Q$ can be factorized through the Hamming classes.
Indeed, the probability that a genotype in the class $i$
mutates into a genotype in the class $k$ is given by
$$M(i,k)\,=\,P\big(
i-Bin(i,q/(\k-1))+Bin(\ell-i,q)\,=\,k
\big)\,,$$
where $Bin(n,p)$ is the binomial law,
and the two binomials in the formula are independent.
%This last quantity depends on $i$ and $k$ only.
Define $\D^\ell$ to be the $\ell$--dimensional unit simplex:
$$\D^\ell\,=\,\big\lbrace\,
x\in[0,1]^{\ell+1}:x_0+\cdots+x_\ell=1
\,\big\rbrace\,.$$
\textbf{Eigen's model.} Let $x_k(t)$ represent
the proportion of individuals in the class $k$
in a population at time $t$.
The quantities $x_k(t)$ evolve according to the following
system of differential equations:
$$(Eig)\qquad
x_k'(t)\,=\,\sum_{i=0}^\ell x_i(t)f(i)M(i,k)
-x_k(t)\sum_{i=0}^\ell x_i(t)f(i)\,,\qquad 0\leq k\leq \ell\,.$$
Note that if $x^0$ belongs to $\D^\ell$,
then the solution $(x(t),t\geq0)$ of $(Eig)$
with initial condition $x(0)=x^0$ belongs to $\D^\ell$
for all $t\geq0$, which is a direct consequence of $M$ 
being a stochastic matrix.

\textbf{The deterministic Moran model.} Let $x_k(n)$
represent the proportion of individuals in the class $k$
in generation $n$.
The quantities $x_k(n)$ evolve according to the following
discrete--time dynamical system:
$$(DM)\qquad
x_k(n+1)\,=\,\frac{\displaystyle \sum_{0\leq i\leq\ell} x_i(n)f(i)M(i,k)}
{\displaystyle\sum_{0\leq i\leq\ell} x_i(n)f(i)}\,,\qquad 0\leq k\leq\ell\,.$$
Again, if $x^0\in\D^\ell$, then the solution $(x(n),n\geq0)$
of $(DM)$ with initial condition $x(0)=x^0$
belongs to $\D^\ell$ for all $n\geq0$.

Let us define the matrix $W$ by
$$\forall\, i,j\in\zl\qquad
W(i,j)\,=\,f(i)M(i,j)\,.$$ 
%\textbf{Hypothesis ($\cH$).} 
%We assume that the matrix $W$ is positive,
%that is, all of its elements are strictly positive.
%
%Assuming hypothesis $\cH$,
%we can ensure the existence and uniqueness of a globally 
%stable stationary solution for both Eigen's system of differential equations
%and Moran's dynamical system. Let us assume that hypothesis $\cH$ holds.
By assumption (A1), the matrix $W$ is strictly positive,
and thus the Perron--Frobenius theorem applies.
We have the following result.

\begin{proposition}\label{sseigen}
Eigen's system of differential equations $(Eig)$ admits a unique 
stationary solution $x^*\in\D^\ell$. Moreover,
for every $x^0\in\D^\ell$, the solution $(x(t),t\geq0)$ of $(Eig)$
with initial condition $x(0)=x^0$ satisfies
$$\lim_{t\to \infty} x(t)\,=\,x^*\,.$$
\end{proposition}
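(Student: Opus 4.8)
The plan is to linearize $(Eig)$ by an explicit change of variables and then to read off both the uniqueness of the stationary solution and the global convergence directly from the Perron--Frobenius theorem applied to $W$. Viewing $x=(x_0,\dots,x_\ell)$ as a row vector, the system $(Eig)$ reads
$$x'(t)\,=\,x(t)W-\Big(\sum_{i=0}^\ell x_i(t)f(i)\Big)x(t)\,.$$
I would introduce the auxiliary \emph{linear} system $y'(t)=y(t)W$, whose solution with $y(0)=x^0$ is $y(t)=x^0e^{tW}$. Since $W$ is strictly positive, $e^{tW}$ has strictly positive entries for every $t>0$, so for any $x^0\in\D^\ell$ the vector $y(t)$ is positive and $\sum_j y_j(t)>0$. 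A direct differentiation then shows that
$$x(t)\,=\,\frac{y(t)}{\sum_{j=0}^\ell y_j(t)}$$
solves $(Eig)$ with initial condition $x^0$; the one computation that matters here is $\sum_j(yW)_j=\sum_i y_i f(i)$, which holds because $M$ is stochastic. By uniqueness of solutions of the locally Lipschitz field $(Eig)$, every trajectory issued from $\D^\ell$ is of this normalized form.

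For the stationary solution, a point $x^*\in\D^\ell$ satisfies $x_k'(t)=0$ for all $k$ if and only if $x^*W=\big(\sum_i x_i^* f(i)\big)x^*$, that is, $x^*$ is a nonnegative left eigenvector of $W$. By Perron--Frobenius, since $W>0$ the only nonnegative eigenvector up to scaling is the left Perron eigenvector $v$, associated with the simple dominant eigenvalue $\l$ equal to the spectral radius of $W$; normalizing $v$ to lie in $\D^\ell$ produces a stationary solution, and uniqueness of the nonnegative eigenvector gives uniqueness of $x^*$, with $\sum_i x_i^* f(i)=\l$.

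To obtain convergence, let $u$ and $v$ be the right and left Perron eigenvectors of $W$, both strictly positive, normalized by $vu=1$. The spectral decomposition yields $e^{tW}=e^{\l t}uv+R(t)$ with $\|R(t)\|=o(e^{\l t})$ as $t\to\infty$, since $\l$ is simple and strictly dominates the modulus of every other eigenvalue. Consequently, for $x^0\in\D^\ell$,
$$y(t)\,=\,x^0e^{tW}\,=\,e^{\l t}(x^0u)\,v+x^0R(t)\,,$$
and because $x^0u=\sum_k x_k^0 u_k>0$, dividing by $\sum_j y_j(t)$ cancels the scalar factor $e^{\l t}(x^0u)$ and annihilates the remainder, whence $x(t)\to v/\sum_j v_j=x^*$. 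The main obstacle is really the linearizing substitution of the first paragraph: once the nonlinear normalization factor is identified with the $\ell^1$-norm of the linear flow $y'=yW$, the nonlinear dynamics collapse to a projective image of a linear flow, and the remaining two steps are routine consequences of the strict positivity of $W$ and the resulting spectral gap.
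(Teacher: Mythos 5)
Your proof is correct, and it takes exactly the route the paper indicates: the paper gives no proof of this proposition, citing the literature and noting only that it follows from the Perron--Frobenius theorem applied to the strictly positive matrix $W$, with $x^*$ the normalized left Perron--Frobenius eigenvector and the equilibrium mean fitness $\l$ the Perron--Frobenius eigenvalue. Your linearization $x(t)=x^0e^{tW}\big/\sum_{j}(x^0e^{tW})_j$ (valid since $M$ is stochastic and $e^{tW}$ is positive) combined with the spectral-gap decomposition $e^{tW}=e^{\l t}uv+o(e^{\l t})$ is precisely the classical argument behind the references the paper cites, so your write-up supplies in full the details the paper omits.
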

%\begin{proposition}\label{ssmoran}
%The dynamical system $(DM)$ admits a unique 
%fixed point $x^*\in\D^\ell$. Moreover,
%for every $x^0\in\D^\ell$, the solution $(x(n),n\geq0)$ of $(DM)$
%with initial condition $x(0)=x^0$ satisfies
%$$\lim_{n\to \infty} x(n)\,=\,x^*\,.$$
%\end{proposition}
This result is well--known,
and has been established by several authors,
see for instance~\cite{BK83,Jones77,JER76,TM74}.
A similar result holds for the deterministic Moran model,
which has been proven by Moran himself in~\cite{Moran76}.
Both results can be proven in a similar way,
by using the Perron--Frobenius theorem.
In fact, the vector $x^*$ is the same in both cases,
and it is the left Perron--Frobenius eigenvector of the matrix $W$,
normalized so that it belongs to $\D^\ell$.
We also remark that the mean fitness of the population at equilibrium,
$$\l\,=\,\sum_{0\leq i\leq\ell}x_i^*f(i)\,,$$
is the Perron--Frobenius eigenvalue of the matrix $W$.

For an error threshold phenomenon to take place,
we consider the regime where the length of the genomes goes to infinity.
More explicitly, we consider the asymptotic regime
$$\ell\to\infty\,,\qquad
q\to 0\,,\qquad 
\ell q\to a\in\, ]0,+\infty[\,.$$
Recall that the mutation matrix $M$ 
is given by the formula 
$$M(i,k)\,=\,P\big(
i-Bin(i,q/(\k-1))+Bin(\ell-i,q)\,=\,k
\big)\,.$$
When considering the above asymptotic regime,
the first of the binomial laws converges to a Dirac mass at 0,
while the second one converges to a Poisson distribution
of parameter $a$. Therefore, we obtain an infinite
mutation matrix $M_\infty$, which is given by
$$\forall\,i,k\geq0\qquad
M_\infty(i,k)\,=\,\begin{cases}
\quad \displaystyle \exa\frac{a^{k-i}}{(k-i)!}\quad
&\text{if}\quad k\geq i\,,\\
\quad 0\quad
&\text{if}\quad k<i\,.
\end{cases}$$
We suppose that there exists a function $f_\infty:\N\lra\,]0,+\infty[\,$,
such that for each $\ell$, the restriction of $f_\infty$
to $\zl$ is equal to the fitness function $f$.
We suppose that the fitness function $f_\infty$
satisfies the following assumption.

\textbf{Assumption (B).} The fitness function $f_\infty$ is positive, has a strict maximum at $0$, and converges to 1, i.e.,
$$\forall\,k\geq1\quad
f_\infty(0)>f_\infty(k)>0\qquad 
\text{and}\qquad\lim_{k\to\infty}f_\infty(k)=1\,.$$
We consider the following limiting systems.

\textbf{Eigen's infinite system.} Let $y_k(t)$ 
represent the proportion of individuals in the class $k$ 
in a population at time $t$.
The quantities $y_k(t)$ evolve according to the following
system of differential equations:
$$(Eig_\infty)\quad
y_k'(t)\,=\,\sum_{i=0}^k
y_i(t)f_\infty(i)\exa \frac{a^{k-i}}{(k-i)!}-
y_k(t)\sum_{i\geq 0}y_i(t)f_\infty(i)\,,\quad
k\geq0\,.$$
\textbf{The infinite deterministic Moran model.}
Let $y_k(n)$ represent the proportion of individuals
in the class $k$ in generation $n$.
The quantities $y_k(n)$
evolve according to the following discrete--time dynamical system:
$$(DM_\infty)\qquad
y_k(n+1)\,=\,\frac{\displaystyle\sum_{0\leq i\leq k} y_i(n)f_\infty(i)\exa\frac{a^{k-i}}{(k-i)!}}{\displaystyle\sum_{i\geq0}y_i(n)f_\infty(i)}\,,\qquad k\geq 0\,.
$$
We first look for the stationary solutions of $(Eig_\infty)$,
which coincide with the fixed points of $(DM_\infty)$.
We restrict our attention to the stationary solutions 
satisfying 
$$\sum_{k\geq0} y_k\,=\,1\,.$$
Let $I(f_\infty)\in\N$ be the set of indices $i$ such that $$f_\infty(i)\exa>1\qquad \text{and}\qquad 
f_\infty(i)>f_\infty(j)\quad \forall\,j>i\,.$$
Under assumption (B), we have the following result.
\begin{proposition}
The system $(Eig_\infty)$ has as many stationary solutions 
as there are elements in $I(f_\infty)$. Moreover, 
for each $i\in I(f_\infty)$, the associated solution $(\rho^i_k)_{k\geq0}$
satisfies
$$\rho^i_0=\cdots=\rho^i_{i-1}=0\qquad
\text{and}\qquad\rho^i_i>0\,.$$
\end{proposition}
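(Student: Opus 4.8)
The plan is to read a stationary solution as a nonnegative, summable solution of the left eigenvalue equation for the infinite matrix $\big(f_\infty(i)M_\infty(i,k)\big)$, and to deduce everything from its minimal support index. Writing $\l=\sum_{i\geq0}y_if_\infty(i)$ for the (yet unknown) mean fitness, the condition $y_k'(t)=0$ becomes the triangular system
$$\l y_k\,=\,\exa\sum_{i=0}^k y_if_\infty(i)\frac{a^{k-i}}{(k-i)!}\,,\qquad k\geq0\,.$$
A normalized solution cannot vanish identically, so set $m=\min\{\,k:y_k>0\,\}$. Reading this equation at $k=m$ and using $y_i=0$ for $i<m$ forces $\l=f_\infty(m)\exa$: the eigenvalue is dictated by the minimal index. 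For $k>m$ the same equation rearranges into the recursion
$$y_k\big(f_\infty(m)-f_\infty(k)\big)\,=\,\sum_{i=m}^{k-1}y_if_\infty(i)\frac{a^{k-i}}{(k-i)!}\,,$$
which determines $y_k$ from $y_m,\dots,y_{k-1}$ as soon as $f_\infty(m)\neq f_\infty(k)$.

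First I would settle positivity and well-definedness. As long as $f_\infty(m)>f_\infty(k)$ the denominator is positive and, the right-hand side being a positive combination of $y_m,\dots,y_{k-1}$, induction gives $y_k>0$. Hence, if $f_\infty(m)>f_\infty(k)$ for every $k>m$, the sequence $(y_k)_{k\geq m}$ is well defined and strictly positive, determined uniquely up to the scalar $y_m$. Conversely, at the first $k>m$ (if any) with $f_\infty(k)\geq f_\infty(m)$ the recursion either becomes inconsistent (vanishing left-hand side against a strictly positive right-hand side, when $f_\infty(k)=f_\infty(m)$) or forces $y_k<0$; in either case no nonnegative solution with minimal index $m$ exists. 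This is precisely the second condition defining $I(f_\infty)$.

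The crux is normalizability: the sequence must be summable so that it can be rescaled into a probability distribution, i.e. $\sum_k y_k<\infty$. Here I would pass to the generating functions $\Phi(s)=\sum_k y_ks^k$ and $\Psi(s)=\sum_k y_kf_\infty(k)s^k$; summing the stationarity relation against $s^k$ collapses the convolution into the clean identity
$$\l\,\Phi(s)\,=\,\Psi(s)\,e^{a(s-1)}\,.$$
The answer I expect is that the radius of convergence of $\Phi$ equals $s^*=\tfrac1a\log f_\infty(m)$, so that $\sum_k y_k=\Phi(1)<\infty$ exactly when $s^*>1$, i.e. when $f_\infty(m)\exa>1$ — the first condition defining $I(f_\infty)$. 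Making this rigorous is the main obstacle, since $f_\infty$ is only assumed to converge to $1$, so $\Phi$ is not an explicit function and the identity above does not by itself localize the singularity. I would circumvent this by a two-sided comparison: fixing $\e>0$ and $N$ with $f_\infty(k)\in(1-\e,1+\e)$ for $k\geq N$, I bound the recursion for $k\geq N$ (absorbing the finitely many head terms as a super-exponentially small forcing) above and below by constant-coefficient renewal recursions whose generating functions are explicit, of the form $\text{const}/\big(1-c_\e(e^{as}-1)\big)$. Their radii of convergence are $\tfrac1a\log\frac{f_\infty(m)}{1\pm\e}$, which sandwich the radius of $\Phi$; letting $\e\to0$ pins it to $s^*$. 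Since the minorant blows up at its radius, the case $s^*\leq1$ — including the boundary $f_\infty(m)\exa=1$ — yields $\Phi(1)=+\infty$ by monotone convergence, ruling out normalizability.

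Finally I would assemble the bijection. For each $i\in I(f_\infty)$ the two conditions deliver a strictly positive, summable sequence with minimal index $i$, unique after normalizing $\sum_k y_k=1$; call it $(\rho^i_k)_{k\geq0}$, so that by construction $\rho^i_0=\cdots=\rho^i_{i-1}=0$ and $\rho^i_i>0$. Conversely, every normalized stationary solution arises in this way from its minimal index, which must lie in $I(f_\infty)$ by the positivity and summability analyses above; distinct indices give distinct solutions, having distinct supports. Thus stationary solutions are in bijection with $I(f_\infty)$, which proves both the count and the stated support property.
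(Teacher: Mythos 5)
Your argument is necessarily a different route from the paper's, because the paper contains no proof of this proposition at all: it is quoted from~\cite{CDClassdep}, together with an explicit formula for $\rho^i$ whose convergence analysis is what establishes the result there. Most of your direct approach is sound. The reduction to the triangular eigenvalue system, the identification $\lambda=f_\infty(m)e^{-a}$ from the minimal support index $m$, the recursion for $k>m$, and the positivity argument showing that the condition $f_\infty(m)>f_\infty(k)$ for all $k>m$ is exactly what is needed for a nonnegative solution, are all correct. The two-sided renewal comparison also settles the two strict regimes correctly: when $f_\infty(m)e^{-a}>1$ the majorant has radius $\frac1a\log\frac{f_\infty(m)}{1+\varepsilon}>1$ for small $\varepsilon$, giving summability, and when $f_\infty(m)e^{-a}<1$ the minorant has radius $\frac1a\log\frac{f_\infty(m)}{1-\varepsilon}<1$ for small $\varepsilon$ with a genuine pole there, giving divergence.

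The gap is the boundary case $f_\infty(m)e^{-a}=1$, which you explicitly claim to cover; you do not. There, for every $\varepsilon>0$ the minorant's radius $\frac1a\log\frac{f_\infty(m)}{1-\varepsilon}$ is \emph{strictly larger} than $1$, so its coefficient sum is finite and yields no lower bound forcing $\Phi(1)=+\infty$; the sandwich only pins the radius of $\Phi$ to exactly $1$, and a power series with nonnegative coefficients and radius $1$ can perfectly well have a finite sum at $s=1$. Worse, this cannot be repaired by sharpening the comparison, because under assumption (B) alone the boundary exclusion is genuinely false. Concretely: take $y_0=c$, $y_k=ck^{-2}$ for $k\geq1$, normalized so that $\sum_k y_k=1$; define $z_k=e^{a}\sum_{j=0}^{k}\frac{(-a)^j}{j!}\,y_{k-j}$, so that $y$ is the convolution of $z$ with the Poisson distribution of parameter $a$, and set $f_\infty(k)=z_k/y_k$. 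For $a$ small enough one checks (by pairing consecutive terms of the alternating sums) that $z_k>0$, that $f_\infty(0)=e^{a}>f_\infty(k)>0$ for $k\geq1$, and that $f_\infty(k)\to1$; thus assumption (B) holds. Then $\sum_k z_k=\sum_k y_k=1$, the mean fitness equals $1$, and the stationarity equations reduce to $y_k=y_k$, so $y$ is a normalized nonnegative stationary solution — yet $f_\infty(0)e^{-a}=1$ and $f_\infty(k)e^{-a}<1$ for $k\geq1$, so $I(f_\infty)=\emptyset$. So at the boundary, radius-of-convergence information can never decide the matter: one needs either finer hypotheses on $f_\infty$ (presumably in force in the cited reference) or an argument of a completely different nature, and your proof as written does not provide one.
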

A similar statement holds for the fixed points of $(DM_\infty)$.
This result has been proven in~\cite{CDClassdep},
where an explicit formula is found for the solutions $\rho^i$.
Indeed, the solution $\rho^i$ is given by: 
for all $k\geq 0$,
$$\rho^i_{i+k}\,=\,\frac{\displaystyle
\frac{1}{f(i)}1_{k=0}+
\frac{a^k}{f(i+k)}\!\!\!\!\!\!\!\!
\sum_{\genfrac{}{}{0pt}{1}{1\leq h\leq k}{0=i_0<\cdots<i_h=k}}
\prod_{t=1}^h \frac{f(i+i_t)}{(i_t-i_{t-1})!(f(i)-f(i+i_t))}}
{\displaystyle \frac{1}{f(i)}+\!\!\!\!\!\!\sum_{\genfrac{}{}{0pt}{1}{h\geq 1}{0=i_0<\cdots<i_h}}
\frac{a^{i_h}}{f(i+i_h)}\prod_{t=1}^h\frac{f(i+i_t)}{(i_t-i_{t-1})!(f(i)-f(i+i_t))}}\,,$$
where an empty sum is taken to be equal to 0,
and the index $\infty$ has been omitted from the fitness function. From now on, we will always omit
the index $\infty$ in the fitness function, 
and we will denote by $f$ both the fitness
function on $\N$ and its restriction to $\zl$.

Before stating our results, we justify the existence and uniqueness of
a global solution of the system $(Eig_\infty)$ for a given initial condition.
The facts stated below follow from the general theory of ODE's on Banach spaces (see for instance~\cite{Cartan}, part II, chapter 1).
We denote by $\ell^1$ the space of absolutely summable sequences
$(y_k)_{k\geq0}$ and by $||\cdot||$ their $\ell^1$ norm, as well as 
the operator norm associated to it.
Define the operator $W_\infty:\ell^1\lra\ell^1$ by setting
$$\forall\,y\in\ell^1\quad
\forall\,k\geq0\qquad
(yW_\infty)_k\,=\,\sum_{i=0}^k y_i f(i)M_\infty(i,k)\,.$$
In view of assumption (B), the operator $W_\infty$ is bounded
by $f(0)$. Moreover,
we can rewrite the system of differential equations
$(Eig_\infty)$
in terms of the operator $W_\infty$ as $y'(t)=F(y(t))$ with
$$F(y)\,=\,yW_\infty-y\langle yW_\infty,1\rangle\,,$$
where for $y\in\ell^1$ and $(h_k)_{k\geq0}$ 
a bounded sequence, $\langle y,h\rangle=\sum_{i\geq0}y_ih_i$.
Since the operator $W_\infty$ is bounded,
the mapping $F:\ell^1\lra\ell^1$ is locally Lipschitz.
Indeed, let $y\in\ell^1$ and $\d>0$,
for every $z\in\ell^1$ such that $||y-z||<\d$, we have
$$||F(y)-F(z)||\,\leq\,||y-z||\cdot||W_\infty||+
||y-z||\cdot\big|\langle yW_\infty,1\rangle\big|+
||z||\cdot\big|\langle (y-z)W_\infty,1\rangle\big|\,.$$
Note that for every $u\in\ell^1$ we have 
$|\langle uW_\infty,1\rangle|\leq ||u||\cdot||W_\infty||$. Thus,
$$||F(y)-F(z)||\,\leq\,||W_\infty||\big(1+||y||+||z||\big)||y-z||\,\leq\,
M(y,\d)||y-z||\,,$$
with $M(y,\d)=||W_\infty||(1+2||y||+\d)$, so that $F$ is locally Lipschitz.
Therefore, the Cauchy problem $y'(t)=F(y(t))$
with initial condition $y(0)\in\ell^1$ admits a unique maximal solution
$y:\,]a,b[\,\lra\ell^1$
with $-\infty\leq a<0<b\leq +\infty$.
Furthermore,
the set 
$$E\,=\,\big\lbrace\,
y\in\ell^1: \forall k\geq0\ y_k\geq0\quad\text{and}\quad ||y||=1
\,\big\rbrace$$
is positively invariant, that is,
if $y(0)\in E$, then for all $t\geq0$ we have $y(t)\in E$.
Indeed, if $y(0)$ is a non--negative sequence, 
the fact that $y_k(t)\geq0$ for all $k\geq0$ and $t\geq0$
follows from lemma~\ref{posilemma} together with an inductive argument.
Moreover, if $y\in\ell^1$ is a non--negative sequence,
$$\frac{d}{dt}||y(t)||\,=\,
\sum_{k\geq0}y'_k(t)\,=\,
\langle yW_\infty,1\rangle
\big(1-||y(t)||\big)\,.$$
Thus, $E$ is positively invariant.
For every $y\in E$ we have $||F(y)||\leq 2||W_\infty||$, therefore
the solution $y$ does not explode and $b$ can be taken to be equal to $\infty$. 
In the sequel, we will only consider solutions
of $(Eig_\infty)$ such that $y(0)\in E$.
In this case, the limit of $y_k(t)$ when $t$ goes
to $\infty$ is well defined for all $k\geq0$.
We now proceed to state our main results.

\section{Main results}\label{Mainres}
We begin by showing the convergence
of the solutions of the system $(Eig_\infty)$.
Assume that $I(f)=\lbrace\,i_1,\dots,i_N\,\rbrace$. Note that $N$ might be equal to 0,
in which case $I(f)$ would be empty.
\begin{theorem}\label{solstatstab}
Let $(y(t))_{t\geq0}$ be a solution of $(Eig_\infty)$.
For every $k\geq0$ and
$h\in\lbrace\,1,\dots,N\,\rbrace$,
$$\lim_{t\to\infty}y_k(t)\,=\,\rho^{i_h}_k$$
if and only if the initial condition satisfies
$$y_0(0)\,=\,\cdots\,=\,y_{i_{h-1}}(0)\,=\,0
\qquad\text{and}\qquad
\max_{i_{h-1}<i\leq i_h}y_i(0)\,>\,0\,.$$
In this case, $y(t)$ converges to $\rho^{i_h}$
in $\ell^1$.
Otherwise, $y_k(t)$ converges to $0$ for 
all $k\geq0$.
\end{theorem}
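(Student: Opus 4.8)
The plan is to linearize the system by lifting it to the cone of nonnegative sequences. Given $y(0)\in E$, let $z(t)=y(0)\exp(tW_\infty)$ be the solution of the \emph{linear} equation $z'(t)=z(t)W_\infty$ in $\ell^1$; since $\|W_\infty\|\leq f(0)$ this is well defined, and because $W_\infty$ has nonnegative entries the operator $\exp(tW_\infty)=\sum_{n\geq0}(t^n/n!)W_\infty^n$ is positivity preserving, so $z(t)\geq0$ for all $t\geq0$. Using $z'=zW_\infty$ and $\|z\|'=\langle zW_\infty,1\rangle$, together with $\langle zW_\infty,1\rangle/\|z\|=\langle(z/\|z\|)W_\infty,1\rangle$, one checks that $y(t)=z(t)/\|z(t)\|$ satisfies $y'=F(y)$; by uniqueness it is \emph{the} solution of $(Eig_\infty)$. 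The whole problem thus reduces, up to normalization, to the large--time asymptotics of the linear semigroup $\exp(tW_\infty)$.

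First I would exploit the triangular structure of $W_\infty$: since $M_\infty(i,k)=0$ for $k<i$, the coordinate $z_k$ depends only on $z_0,\dots,z_k$. Writing $j^\ast=\min\{k:y_k(0)>0\}$ for the lowest occupied class, this forces $z_k(t)=0$ for $k<j^\ast$ and $z_{j^\ast}(t)=z_{j^\ast}(0)\exp(f(j^\ast)\exa t)>0$. Solving the triangular system recursively, each $z_k(t)$ is an exponential polynomial $\sum_{j^\ast\leq i\leq k}p_{k,i}(t)\exp(f(i)\exa t)$ whose exponential rate is $\max_{j^\ast\leq i\leq k}f(i)\exa$; the polynomial factors arise exactly from the resonances between classes sharing a common diagonal rate $f(i)\exa$, and because the upward mutation flux $M_\infty(i,k)>0$ for $i<k$ is nonzero, these resonant terms funnel the leading coefficient towards the \emph{higher} class.

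The key structural fact is the identification of the dominant mode. Assume $i_{h-1}<j^\ast\leq i_h$, which is precisely the stated condition on $y(0)$. Since $f(i_h)\exa>1$ gives $f(i_h)>e^{a}>1=\lim f$, only finitely many classes have fitness $\geq f(i_h)$, so $f$ attains its maximum on $\{i\geq j^\ast\}$; let $i^\ast$ be the largest maximizer. Then $i^\ast$ is a strict right--record (no $j>i^\ast$ attains the maximum) with $f(i^\ast)\exa\geq f(i_h)\exa>1$, hence $i^\ast\in I(f)$; moreover $i^\ast\leq i_h$ because $i_h$ is itself a record, and $i^\ast\geq j^\ast>i_{h-1}$, so by consecutiveness of $i_{h-1},i_h$ in $I(f)$ we get $i^\ast=i_h$. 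Thus the dominant rate of the whole sequence is $\lambda^\ast=f(i_h)\exa>1$, realized at the top class $i_h$ with no class above it tying, and the resonance cascade concentrates the leading coefficient on $\{k\geq i_h\}$. I would then show that $\exp(-\lambda^\ast t)z(t)$ converges in $\ell^1$ to $c\,\rho^{i_h}$ for some $c>0$, using the subcriticality $f(k)\exa\to\exa<1<\lambda^\ast$ of high classes to control the tail; the limit is a nonnegative left eigenvector of $W_\infty$ for $\lambda^\ast$ supported on $\{k\geq i_h\}$, which the eigenvalue equation determines uniquely and the explicit formula identifies with $\rho^{i_h}$, while $c>0$ follows from the positive chain $j^\ast\to i_h$ feeding the dominant mode. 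Normalizing gives $y(t)\to\rho^{i_h}$ in $\ell^1$. Since the conditions on $y(0)$ partition all initial data into pieces with pairwise distinct limits $\rho^{i_1},\dots,\rho^{i_N},0$, the ``only if'' statement follows automatically.

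The main obstacle is the complementary regime $j^\ast>i_N$ (including $N=0$), where every coordinate must vanish although $\|y(t)\|=1$, i.e.\ mass escapes to infinity. Here the largest maximizer $i^\ast$ of $f$ on $\{i\geq j^\ast\}$ is again a strict right--record but now lies above every element of $I(f)$, so $f(i^\ast)\exa\leq1$, and hence each $z_k(t)$ grows with exponential rate at most $\lambda'=\max_{i\geq j^\ast}f(i)\exa\leq1$, say $z_k(t)\leq C_k(1+t)^{k}\exp(\lambda't)$. The total mass, however, obeys $\|z(t)\|'=\langle z(t),f\rangle$, and the upward Poisson drift pushes the bulk of $z(t)$ into high classes where $f\to1$; comparing with the pure drift semigroup (for which $f\equiv1$ gives exactly $\|z(t)\|=\|z(0)\|e^{t}$) I would derive a lower bound $\|z(t)\|\geq c\exp(\mu t)$ with $\mu>\lambda'$, so that $y_k(t)=z_k(t)/\|z(t)\|\to0$ for each fixed $k$. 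The genuinely borderline case $\lambda'=1$ (the error threshold itself) is where this comparison is tightest and forces one to track the growing degree of the polynomial prefactors rather than the exponential rate alone; this is the step I expect to demand the most care.
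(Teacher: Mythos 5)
Your reduction to the linear semigroup is sound, and it is in essence the same linearization the paper uses, viewed globally: the paper divides by $y_0(t)$ (after showing positivity persists) and works with the triangular linear system for $z_k=y_k/y_0$, while you normalize $z(t)=y(0)\exp(tW_\infty)$ by its total mass; the two parametrizations differ only by the scalar factor $z_0(t)$. Your spectral bookkeeping is correct where the paper is silent: the identification of the dominant class ($i^*=i_h$ exactly when $i_{h-1}<j^*\leq i_h$), the fact that resonances among tied maximizers push the polynomial degree up to the top class $i_h$, the recursive determination of the nonnegative eigenvector supported on $\{k\geq i_h\}$, and the partition argument for the ``only if'' direction are all right, and your plan covers all cases whereas the paper writes out only the case $i_1=0$, $y_0(0)>0$ and asserts the rest is similar.

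The genuine gap is the step you dispatch with ``using the subcriticality \dots to control the tail.'' Pointwise convergence $t^{-(m-1)}e^{-\lambda^* t}z_k(t)\to c_k$ (note the polynomial correction, which your displayed rescaling $e^{-\lambda^*t}z(t)$ omits in the resonant case) does \emph{not} imply convergence of the normalization: Fatou only gives $\liminf_t\,t^{-(m-1)}e^{-\lambda^* t}\|z(t)\|\geq\|c\|$, and what must be excluded is mass escaping to high classes at the dominant order, which would make the $\limsup$ strictly larger and ruin $y(t)=z(t)/\|z(t)\|\to c/\|c\|$. This is precisely where the paper's real work lies: it first treats fitness functions that are eventually constant, for which the mean fitness depends on finitely many coordinates and the normalization obeys the scalar equation $y_0'=y_0\big(\beta-y_0\alpha(t)\big)$, and then sandwiches a general $f$ between such truncations using the comparison Lemma~\ref{compedolin}. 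Your outline needs an argument of this type (for instance, writing $\|z\|'=\langle z,f\rangle=\|z\|+\sum_{k\leq K}z_k(f(k)-1)+O(\varepsilon\|z\|)$ and integrating) before it closes. Separately, in the extinction regime your exponential-rate comparison genuinely fails at the threshold $\lambda'=1$, as you yourself note: there the separation between $\|z(t)\|$ and the individual coordinates is only polynomial, and your proposal leaves that case open. In fairness, the paper's printed proof does not treat the extinction regime at all, so on that point you are no less complete than the paper --- but it remains a hole in the argument as written.
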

Next, we show that the solutions of $(Eig)$
converge to the solutions of $(Eig_\infty)$ on finite time intervals.
\begin{theorem}\label{convtraj}
Let $(x(t))_{t\geq0}$ and $(y(t))_{t\geq0}$
be solutions of $(Eig)$ and $(Eig_\infty)$
respectively,
and assume that the initial conditions converge, i.e.,
$$\forall\, k\geq 0\,,\qquad
\lim_{
\genfrac{}{}{0pt}{1}{\ell\to\infty,\,
q\to 0}
{{\ell q} \to a}
}
\,x_k(0)\,=\,y_k(0)\,.
$$
Then, for every $T>0$ 
and for every $k\geq 0$,
$$\lim_{
\genfrac{}{}{0pt}{1}{\ell\to\infty,\,
q\to 0}
{{\ell q} \to a}
}
\,\sup_{0\leq t\leq T}\,
|x_k(t)-y_k(t)|\,=\,0\,.$$
\end{theorem}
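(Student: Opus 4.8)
The plan is to embed the finite solution $x(t)$ into $\ell^1$ by setting $x_k(t)=0$ for $k>\ell$, so that both $x(t)$ and $y(t)$ are probability vectors, and then to control the partial sums $S_N(t)=\sum_{k=0}^N|x_k(t)-y_k(t)|$ of the coordinatewise differences by a Gronwall argument on a fixed finite block of indices $\{0,\dots,N\}$. Writing $\Delta_k=x_k-y_k$ and subtracting the two systems, for $\ell>N\geq k$ one has $\Delta_k'=A_k-B_k$, where the selection part splits as
$$A_k=\sum_{i=0}^k\Delta_i f(i)M_\infty(i,k)+\sum_{i=0}^k x_i f(i)\big(M(i,k)-M_\infty(i,k)\big)+\sum_{i=k+1}^\ell x_i f(i)M(i,k)\,,$$
and the mean-fitness part is $B_k=\Delta_k\phi_\ell+y_k(\phi_\ell-\phi_\infty)$, with $\phi_\ell=\sum_i x_if(i)$ and $\phi_\infty=\sum_i y_if(i)$. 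The structural facts I would exploit are that $M_\infty$ is upper triangular and stochastic, so the first sum in $A_k$ involves only $\Delta_0,\dots,\Delta_k$ and telescopes after summing over $k$; the second sum is a deterministic model error that vanishes because $M(i,k)\to M_\infty(i,k)$ for each fixed pair; and the third sum is the back-mutation flux into class $k$, which must be shown to vanish.

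First I would estimate, from $|\Delta_k(t)|\leq|\Delta_k(0)|+\int_0^t|\Delta_k'|$ summed over $0\leq k\leq N$, and using $f\leq f(0)$ together with the stochasticity of $M_\infty$ (which gives $\sum_{k=0}^N\sum_{i=0}^k|\Delta_i|M_\infty(i,k)\leq S_N$),
$$S_N(t)\leq S_N(0)+\int_0^t\Big(2f(0)S_N(s)+\sum_{k=0}^N\eta_k(s)+|\phi_\ell-\phi_\infty|\Big)\,ds\,,$$
where $\eta_k$ collects the model error and the back-mutation flux. The coupling through the global quantity $\phi_\ell-\phi_\infty$ is handled using assumption (B): writing $f=1+g$ with $g(i)\to0$, the two simplex constraints give $\phi_\ell-\phi_\infty=\sum_{i\geq0}\Delta_i g(i)$, whence $|\phi_\ell-\phi_\infty|\leq\|g\|_\infty S_N+2\varepsilon_N$ with $\varepsilon_N=\sup_{i>N}|f(i)-1|$ (here $\sum_{i>N}|\Delta_i|\leq2$). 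Substituting yields a closed integral inequality with constant $C=2f(0)+\|g\|_\infty$ independent of $\ell$ and $q$, and Gronwall gives
$$\sup_{0\leq t\leq T}S_N(t)\leq e^{CT}\Big(S_N(0)+\int_0^T\textstyle\sum_{k=0}^N\eta_k(s)\,ds+2\varepsilon_N T\Big)\,.$$

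Letting $\ell\to\infty$ for fixed $N$, the term $S_N(0)\to0$ by the assumed convergence of the initial conditions (a finite sum), and $\int_0^T\sum_{k\leq N}\eta_k\to0$ since each $M(i,k)\to M_\infty(i,k)$ and each back-mutation flux vanishes, leaving $\limsup_\ell\sup_{[0,T]}S_N\leq 2\varepsilon_N Te^{CT}$. The point to stress is that one cannot send $\varepsilon_N\to0$ at fixed $N$; instead, since $|\Delta_k|\leq S_N$ for every $N\geq k$ and $\varepsilon_N\to0$ as $N\to\infty$, one sends $N\to\infty$ after $\ell$ to conclude $\limsup_\ell\sup_{[0,T]}|x_k-y_k|=0$, which is the claim.

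The main obstacle is showing that the back-mutation flux $\sum_{i>k}x_i(t)M(i,k)$ vanishes uniformly in $t$. The naive estimate $M(i,k)\leq P(\mathrm{Bin}(i,q/(\k-1))\geq1)\leq \ell q/(\k-1)\to a/(\k-1)$ does not decay, so a finer combinatorial bound is required: a net descent of $j=i-k$ classes forces $\mathrm{Bin}(i,q/(\k-1))\geq j$, so that $M(k+j,k)\leq\binom{k+j}{j}(q/(\k-1))^j$, and with $x_i\leq1$,
$$\sum_{i>k}x_i(t)M(i,k)\leq\sum_{j\geq1}\binom{k+j}{j}\Big(\frac{q}{\k-1}\Big)^{j}\leq\frac{1}{k!}\sum_{j\geq1}(k+j)^k\Big(\frac{q}{\k-1}\Big)^{j}\,,$$
which for fixed $k$ is $O(q)$ as $q\to0$, hence tends to $0$ uniformly in $t$. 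I would isolate this geometric tail bound as a preliminary lemma; the remaining ingredients (the pointwise convergence $M(i,k)\to M_\infty(i,k)$ and the Gronwall bookkeeping) are then routine.
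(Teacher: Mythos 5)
Your proof is correct, and its skeleton is the same as the paper's: write both systems in integral form, add and subtract $x_i f(i)M_\infty(i,k)$ in the selection term and $y_k\phi_H$ in the mean-fitness term, run a Gronwall argument on the block sum $S_N(t)=\sum_{k=0}^N|x_k(t)-y_k(t)|$, and treat the back-mutation flux and the discrepancy $M(i,k)-M_\infty(i,k)$ as vanishing inhomogeneities. Two differences are worth recording. First, your Gronwall constant $C=2f(0)+\|g\|_\infty$ is independent of the truncation level $N$: you sum $y_k|\phi_\ell-\phi_\infty|$ over $k\leq N$ using $\sum_{k\leq N}y_k\leq1$ once, and you bound $|\phi_\ell-\phi_\infty|\leq\|g\|_\infty S_N+2\varepsilon_N$ from the two simplex constraints. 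This uniformity is exactly what permits the clean iterated limit (first $\ell\to\infty$, $q\to0$, $\ell q\to a$ at fixed $N$, then $N\to\infty$), with the surviving tail error $2\varepsilon_N Te^{CT}$ killed at the very end. The paper instead couples everything to a single accuracy parameter $\delta$: the level $N$ is chosen so that $|f(n)-1|<\delta$ for $n\geq N$, while the concluding requirement is $\delta<\varepsilon C_1^{-1}e^{-C_2T}$ with constants $C_1,C_2$ that grow linearly in $n\geq N(\delta)$; your two-parameter bookkeeping disentangles the accuracy from $N(\delta)$ and is the more robust way to organize the same estimate. Second, where the paper simply asserts that asymptotically $\max_{k<i\leq\ell}M_H(i,k)<\delta$, you prove a quantitative and stronger statement, $\sum_{i>k}M(i,k)\leq\sum_{j\geq1}\binom{k+j}{j}\bigl(q/(\kappa-1)\bigr)^j=O(q)$, from the observation that a net descent of $j$ classes forces at least $j$ backward mutations; the paper's weaker assertion suffices for its argument (it pairs the maximum with $\sum_i x_i=1$, whereas you pair summability of the tail with $x_i\leq1$), but your lemma makes this step self-contained rather than taken for granted.
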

Finally, we study that the convergence of 
the unique stationary solution of $(Eig)$.
\begin{theorem}\label{convpstat}
Let $x^*=(x^*_k)_{0\leq k\leq\ell}$
be the unique stationary solution of $(Eig)$.
We have the following dichotomy:

$\bullet$ If $f(0)\exa\leq 1$,
$$\forall\, k\geq 0\,,\qquad \lim_{
\genfrac{}{}{0pt}{1}{\ell\to\infty,\,
q\to 0}
{{\ell q} \to a}
}
\,x^*_k\,=\,0\,.$$

$\bullet$ If $f(0)\exa> 1$,
$$\forall\, k\geq 0\,,\qquad \lim_{
\genfrac{}{}{0pt}{1}{\ell\to\infty,\,
q\to 0}
{{\ell q} \to a}
}
\,x^*_k\,=\,\rho^0_k
\,,$$
where $(\rho^0_k)_{k\geq0}$ 
is the unique stationary solution of $(Eig_\infty)$
satisfying $\rho^0_0>0$.
\end{theorem}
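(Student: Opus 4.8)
The plan is to use that $x^*$ is the normalized left Perron--Frobenius eigenvector of $W=(f(i)M(i,k))$, with eigenvalue $\l=\sum_i x^*_if(i)$, and to pass to the limit directly in the eigenvalue equation $\l x^*_k=\sum_{i=0}^\ell x^*_if(i)M(i,k)$. First I would record two a priori facts about $\l$. Testing the Collatz--Wielandt inequality against the vector $e_0$ gives $\l\ge f(0)M(0,0)=f(0)(1-q)^\ell$, so $\liminf\l\ge f(0)\exa$ in the regime, while trivially $\l\le f(0)$. Viewing each $x^*$ as an element of $[0,1]^\N$ (extended by $0$) and using sequential compactness in the product topology together with boundedness of $\l$, I extract along any sequence in the regime subsequential limits $x^*\to\bar x$ coordinatewise and $\l\to\bar\l\in[f(0)\exa,f(0)]$, with $\bar x\ge0$ and $\sum_k\bar x_k\le1$ by Fatou. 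The goal is to show $\bar x=\rho^0$ when $f(0)\exa>1$ and $\bar x=0$ otherwise.

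Next I would show that every such $(\bar x,\bar\l)$ solves the limiting stationary recursion $\bar\l\,\bar x_k=\sum_{i=0}^k\bar x_i f(i)\exa a^{k-i}/(k-i)!$. Fixing $k$ and splitting the sum over $i\le k$ and $i>k$, the forward part is a finite sum and converges using $M(i,k)\to M_\infty(i,k)$ for fixed $i,k$. The backward part is the technical crux: a transition $i\to k$ with $i>k$ needs at least $i-k$ back--mutations, so $M(i,k)\le\binom{i}{i-k}(q/(\k-1))^{i-k}\le((k+j)q/(\k-1))^{j}/j!$ with $j=i-k$. Since $(k+j)q/(\k-1)\le\ell q/(\k-1)\to a/(\k-1)$, truncating at a level $J$ the bounded--$j$ terms vanish because $(q/(\k-1))^{j}\to0$, while the tail is controlled uniformly by $\sup_{j>J}(a/(\k-1)+\e)^{j}/j!\to0$ together with $\sum_i x^*_i\le1$; hence the backward part tends to $0$ and the recursion passes to the limit.

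I would then identify the nonzero limits. If $\bar x\neq0$, let $i_0$ be its first nonzero index; the recursion at $k=i_0$ forces $\bar\l=f(i_0)\exa$, and combined with $\bar\l\ge f(0)\exa$ and the strict maximality of $f(0)$ this forces $i_0=0$, so $\bar x_0>0$ and $\bar\l=f(0)\exa$. The recursion then determines the ratios $\bar x_k/\bar x_0$ uniquely, so $\bar x=c\,\rho^0$ for some $c\in(0,1]$, which is consistent with $\rho^0\in\ell^1$ only when $0\in I(f)$, i.e. $f(0)\exa>1$.

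The main obstacle is the possible escape of mass to infinity: a priori $\bar x$ could be a deficient multiple $c\rho^0$ with $c<1$ (or even $\bar x=0$ in the localized regime), and a brute--force uniform geometric tail bound on $x^*$ would be painful. I would circumvent this using the mean--fitness identity together with the fact that $f-1$ vanishes at infinity. From $x^*\to\bar x$ coordinatewise and $\sum_k x^*_k,\sum_k\bar x_k\le1$ one gets $\langle x^*,f-1\rangle\to\langle\bar x,f-1\rangle$, whereas $\langle x^*,f-1\rangle=\l-1\to\bar\l-1$. If $f(0)\exa\le1$, Step three makes a nonzero limit impossible, so $\bar x=0$ and $x^*_k\to0$. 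If $f(0)\exa>1$ and $\bar x=0$, then $\bar\l-1\ge f(0)\exa-1>0$ contradicts $\langle\bar x,f-1\rangle=0$; hence $\bar x=c\rho^0$ with $\bar\l=f(0)\exa$, and summing the limiting recursion gives $\langle\rho^0,f\rangle=\bar\l=f(0)\exa$, so the identity reads $c(f(0)\exa-1)=f(0)\exa-1$, forcing $c=1$ and $\bar x=\rho^0$. Since all subsequential limits coincide, the whole family converges, which is exactly the stated dichotomy.
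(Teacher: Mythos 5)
Your proposal is correct and follows essentially the same route as the paper's proof: extract subsequential limits of $(x^*,\l)$, pass to the limit in the eigenvalue equation (with the backward--mutation terms vanishing), identify any nonzero limit via the first nonzero index and the bound $\bar\l\ge f(0)\exa$, and use the mean--fitness identity with $f-1$ vanishing at infinity to exclude escape of mass. Your write--up is in places more explicit than the paper's (the quantitative tail bound on back--mutations, and the non--circular treatment of the normalization $c=1$, which the paper compresses into the terse identity $\phi_\infty\sum_k y_k^*=\sum_i y_i^*f(i)=\phi_\infty$), but these are refinements of the same argument rather than a different approach.
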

Analogous results hold for the discrete--time models $(DM)$
and $(DM_\infty)$. The proofs are similar in both cases,
and thus, in what follows, we will only deal with the continuous--time
case.
The next three sections prove each of the above results.

\section{Convergence to equilibrium}\label{Convequi}
%\begin{theorem}\label{solstatstab}
%Let $(y(t))_{t\geq0}$ be a solution of $(Eig_\infty)$.
%For every $k\geq0$ and
%$i\in I$,
%$$\lim_{t\to\infty}y_k(t)\,=\,\rho^i_k$$
%if and only if the initial condition satisfies
%$$y_0(0)\,=\,\cdots\,=\,y_{k_{i-1}}(0)\,=\,0
%\qquad\text{and}\qquad
%\max_{k_{i-1}<k\leq k_i}y_k(0)\,>\,0\,.$$
%\end{theorem}
The aim of this section is to prove the theorem~\ref{solstatstab}.
%\begin{proof}
We will only show that 
if $i_1=0$ and $y_0(0)>0$, then for every
$k\geq 0$ ,
$$\lim_{t\to\infty}y_k(t)\,=\,\rho_k^0\,,$$
where $(\rho^0_k)_{k\geq0}$ 
is the stationary solution of $(Eig_\infty)$
associated to $0$.
The remaining cases can be shown in a similar fashion. We denote by $\phi_\infty(t)$
the mean fitness of the system $(Eig_\infty)$,
i.e., $\phi_\infty(t)=\sum_{i\geq0}y_i(t)f(i)$.
Let us note first that if $y_0(0)>0$,
then $y_0(t)>0$ for all $t\geq 0$.
Indeed, since for all $t\geq0$
we have $\phi_\infty(t)\leq f(0)$,
$$y_0'(t)\,=\,y_0(t)f(0)\exa-y_0(t)\phi_\infty(t)
\,\geq\, y_0(t)f(0)(\exa-1)\,.$$
Therefore, for all $t\geq0$,
$$y_0(t)\,\geq\,
y_0(0)e^{-f(0)(1-\exa)t}\,>\,0\,.$$
We can thus make the following change of variables:
for all ${k\geq0}$,
we set $z_k(t)=y_k(t)/y_0(t)$.
Differentiating, we obtain a new system of differential equations:
$$
z_k'(t)\,=\,\sum_{i=0}^k
z_i(t)f(i)\exa\frac{a^{k-i}}{(k-i)!}
-z_k(t)f(0)\exa\,,\qquad
k\geq1\,.$$
Thanks to this change of variables,
we have managed to transform the original 
system of differential equations into a linear system.
We will show by induction that for all $k\geq 0$,
$z_k(t)$ converges to $z_k^*$ when $t$ goes to infinity,
where $z_k^*=\rho^0_k/\rho^0_0$.
The result is obvious for $k=0$, 
since $z_0(t)=1$ for all $t\geq0$.
Let $k\geq 1$ and suppose that $z_i(t)$
converges to $z_i^*$
for $i\in\lbrace\,0,\dots,k-1\,\rbrace$.
We have
$$z_k'(t)\,=\,
\sum_{i=0}^{k-1}z_i(t)f(i)\exa\frac{a^{k-i}}{(k-i)!}
-(f(0)-f(k))\exa z_k(t)\,.$$
We conclude that (appendix~\ref{LemmesEdo}):
$$
\lim_{t\to\infty}y_k(t)\,=\,
\frac{1}{f(0)-f(k)}\sum_{i=0}^{k-1}
z_i^* f(i)\frac{a^{k-i}}{(k-i)!}\,=\,
z_k^*\,.
$$
This concludes the induction step.
It remains to prove that
$$\lim_{t\to\infty}y_0(t)\,=\,\rho_0^0\,.$$
We have:
$$y_0(t)\,=\,\Bigg(
\sum_{k=0}^\infty
z_k(t)\Bigg)^{-1}\qquad
\text{and}\qquad
\rho_0^0\,=\,\Bigg(
\sum_{k=0}^\infty
z_k^*\Bigg)^{-1}\,.$$
First, we will prove the convergence 
assuming that the fitness function $f$
is eventually constant,
and we will then use this fact to 
prove the general case.
Let us suppose the existence of an $N\geq 0$ 
such that the fitness function $f$
is constant and equal to $1$
for all $n>N$. 
In this case, the mean fitness
$\phi_\infty(t)$ 
is a function of $y_0(t),\dots,y_N(t)$,
$$\phi_\infty(t)\,=\,\sum_{0\leq k\leq N}
y_k(t)(f(k)-1)+1\,=\,
y_0(t)
\sum_{0\leq k\leq N}
z_k(t)(f(k)-1)+1\,.
$$
Likewise, the mean fitness at equilibrium, $\phi_\infty^*$,
is a function of $\rho^0_0,\dots,\rho^0_N$:
$$\phi^*_\infty\,=\,\sum_{0\leq k\leq N}
\rho^0_k(f(k)-1)+1\,=\,
\rho^0_0
\sum_{0\leq k\leq N}
z_k^*(f(k)-1)+1\,.$$
Yet, $\phi^*_\infty=f(0)\exa$.
We conclude that
$$\sum_{0\leq k\leq N}z_k^*(f(k)-1)\,=\,
\frac{f(0)\exa-1}{\rho^0_0}\,>\,0\,.$$
Set
\begin{align*}
\a(t)\,&=\,\sum_{0\leq k\leq N}
z_k(t)(f(k)-1)\,,\\
\a^*\,&=\,\sum_{0\leq k\leq N}
z_k^*(f(k)-1)\,,\\
\b\,&=\,f(0)\exa-1\,.
\end{align*}
The differential equation for $y_0(t)$ 
can be rewritten as
$$y_0'(t)\,=\,y_0(t)\big(
\b-y_0(t)\a(t)
\big)\,.$$
We will show that $y_0(t)$ converges
to $\b/\a^*\,=\,\rho^0_0$.
Let $\e>0$ be small enough so that $\a^*-\e>0$,
and let $T\geq0$ be large enough so that
$$\forall\, t\geq T\qquad
|\a(t)-\a^*|\,<\,\e\,.$$
Then, for all $t\geq T$,
the derivative $y_0'(t)$ is strictly positive over
$]0,\b/(\a^*+\e)[$ 
and strictly negative over $]\b/(\a^*-\e),1]$.
We deduce the existence of a $T_1>T$ such that
$$\forall\, t\geq T_1\qquad
\frac{\b}{\a^*+\e}\,\leq\,y_0(t)\,\leq\,
\frac{\b}{\a^*-\e}\,.$$
Letting $\e$ go to 0,
we obtain the convergence of $y_0(t)$ towards $\rho^0_0$.
In particular, we get the convergence
$$\lim_{t\to\infty}\ \sum_{k\geq0}z_k(t)\,=\,\sum_{k\geq0}z_k^*\,.$$
If $f$ is not eventually constant,
we choose $\e>0$ small enough so that
$f(0)\exa>1+\e$ and $N\geq 0$ large enough so that
$$\forall\,n>N\qquad
f(n)\,<\,1+\e\,.$$
Let $f_N:\N\lra\R^+$ be the mapping defined by:
$$\forall\, n\geq0\qquad
f_N(n)\,=\,\begin{cases}
\quad f(n)&\quad\text{if}\quad n\leq N\,,\\
\quad 1+\e&\quad\text{if}\quad n>N\,.
\end{cases}$$
Consider the system of differential equations
$$u_k'(t)\,=\,
\sum_{0\leq i\leq k}
u_i(t)f_N(i)\exa\frac{a^{k-i}}{(k-i)!}
-u_k(t)f_N(0)\exa\,.$$
Since $f\leq f_N$ and $f(0)=f_N(0)$,
if $y_k(0)\leq u_k(0)$ for all $k\geq0$, we have,
thanks to the lemma~\ref{compedolin},
$$\forall\, k\geq0\quad
\forall\, t\geq0\qquad
y_k(t)\,\leq\,u_k(t)\,.$$
Moreover, since $f_N$ 
is eventually constant,
the series with general term
$u_k(t)$ converges
to the series with general term $u_k^*$.
We conclude that the same holds for
the series with general term $z_k(t)$, 
as wanted.
It remains to see that $y(t)$ converges to $\rho^0$
in $\ell^1$.
Let $\e>0$ and choose $N\geq0$ large enough
so that $\rho^0_0+\cdots+\rho^0_N>1-\e/4$.
It follows from the argument above that
there exists $T>0$ such that
$$\forall\,k\in\lbrace\,0,\dots,N\,\rbrace\quad
\forall\,t\geq T\qquad
|y_k(t)-\rho_k^0|\,<\,\frac{\e}{4(N+1)}\,.$$
In particular, for $t\geq T$,
$$
\sum_{k>N} y_k(t)\,=\,
1-\sum_{k=0}^N y_k(t)
\,\leq\,\bigg|
1-\sum_{k=0}^N \rho^0_k
\bigg|+
\sum_{k=0}^N|\rho^0_k-y_k(t)|\,<\,\frac{\e}{2}\,.
$$
Then, for all $t\geq T$,
$$||y(t)-\rho^0||\,\leq\,
\sum_{k=0}^N |y_k(t)-\rho^0_k|+
\sum_{k>N} y_k(t)+\sum_{k>N}\rho^0_k\,<\,\e\,,
$$
which proves the $\ell^1$ convergence.
%\end{proof}

\section{Convergence of the trajectories}\label{Convtraj}
%\begin{theorem}
%Let $(x(t))_{t\geq0}$ and $(y(t))_{t\geq0}$
%be solutions of $(Eig)$ and $(Eig_\infty)$
%respectively,
%and assume that the initial conditions converge, i.e.,
%$$\forall\, k\geq 0\,,\qquad
%\lim_{
%\genfrac{}{}{0pt}{1}{\ell\to\infty,\,
%q\to 0}
%{{\ell q} \to a}
%}
%\,x_k(0)\,=\,y_k(0)\,.
%$$
%Then, for every $T>0$ 
%and for every $k\geq 0$,
%$$\lim_{
%\genfrac{}{}{0pt}{1}{\ell\to\infty,\,
%q\to 0}
%{{\ell q} \to a}
%}
%\,\sup_{0\leq t\leq T}\,
%|x_k(t)-y_k(t)|\,=\,0\,.$$
%\end{theorem}
The aim of this section is to prove the theorem~\ref{convtraj}.
%\begin{proof}
Let $\e,\d,T>0$ and let $N$ be large enough so that
$$\forall\, n\geq N\qquad
|f(n)-1|\,<\,\d\,.$$
We will show that,
for every $n\geq N$ and $t\leq T$,
asymptotically,
$$
\sum_{k=0}^n |x_k(t)-y_k(t)|\,<\,\e\,.
$$
Let $n\geq N$. Asymptotically,
for every $k\in\lbrace\,1,\dots,n\,\rbrace$,
\begin{align*}
\forall\, i\in\lbrace\,0,\dots,k\,\rbrace\,,&\qquad
\big|M_H(i,k)-M_\infty(i,k)\big|\,<\,\d\,,\\
\forall\, i\in\lbrace\,k+1,\dots,\ell\,\rbrace\,,&\qquad
M_H(i,k)\,<\,\d\,,\\
%\forall s\in[0,T]\,,&\qquad
%\max_{0\leq i<k}\big|x_i(s)-y_i(s)\big|\,<\,\d\,,\\
&\qquad \big|x_k(0)-y_k(0)\big|\,<\,\d\,.
\end{align*}
Moreover, 
denoting by $\phi_H(t)$ and $\phi_\infty(t)$
the mean fitness of the systems $(Eig)$ and $(Eig_\infty)$, for every $t\geq0$,
\begin{align*}
\Bigg|\phi_H(t)-\sum_{k=0}^N x_k(t)(f(k)-1)-1\Bigg|\,&=\,
\Bigg|\sum_{k=N+1}^\ell x_k(t)(f(k)-1)\Bigg|\,<\,\d\,,\\
\Bigg|\phi_\infty(t)-\sum_{k=0}^N y_k(t)(f(k)-1)-1\Bigg|\,&=\,
\Bigg|\sum_{k\geq N+1} y_k(t)(f(k)-1)\Bigg|\,<\,\d\,.
\end{align*}
We have, for every $k\geq0$ and $t\geq0$,
\begin{align*}
x_k(t)\,&=\,
x_k(0)+
\int_0^t \Bigg(
\sum_{i=0}^\ell 
x_i(s)f(i)M_H(i,k)-x_k(s)\phi_H(s)
\Bigg)ds\,,\\
y_k(t)\,&=\,
y_k(0)+
\int_0^t \Bigg(
\sum_{i=0}^k 
y_i(s)f(i)M_\infty(i,k)-y_k(s)\phi_\infty(s)
\Bigg)ds\,.
\end{align*}
Thus, for every $t\in[0,T]$,
\begin{multline*}
|x_k(t)-y_k(t)|\,\leq\,
|x_k(0)-y_k(0)|+\\
\sum_{i=0}^k
\int_0^t
\big|x_i(s)f(i)M_H(i,k)-y_i(s)f(i)M_\infty(i,k)\big|ds+\\
\sum_{i=k+1}^\ell 
\int_0^t
x_i(s)f(i)M_H(i,k)ds+
\int_0^t
|x_k(s)\phi_H(s)-y_k(s)\phi_\infty(s)|ds
%\\\leq\,
%\d+\sum_{i=0}^k f(i)M_\infty(i,k)\int_0^t
%|x_i(s)-y_i(s)|ds+f(0)T\d
%+2f(0)T\d+\\
%\sum_{i=0}^N |f(i)-c|
%\int_0^t
%|x_k(s)x_i(s)-y_k(s)y_i(s)|ds
%+c\int_0^t|x_k(s)-y_k(s)|ds+2T\d
\,.
\end{multline*}
The first term on the right is bounded by $\d$.
Adding and subtracting the quantity $x_i(s)f(i)M_\infty(i,k)$
in each of the terms in the first sum,
we see that the first sum is bounded by 
$$\sum_{i=0}^k f(i)M_\infty(i,k)\int_0^t
|x_i(s)-y_i(s)|ds+f(0)\d T\,.$$
The second sum is bounded by $f(0)\d T$,
and for the last term,
adding and subtracting $y_k(s)\phi_H(s)$ 
inside the integral, we have
\begin{multline*}
\int_0^t
|x_k(s)\phi_H(s)-y_k(s)\phi_\infty(s)|ds
\,\leq\\
\int_0^t|x_k(s)-y_k(s)|\phi_H(s)ds
+\int_0^t y_k(s)|\phi_H(s)-\phi_\infty(s)|ds\,.
\end{multline*}
Noting that $\phi_H(s)\leq f(0)$ 
and $y_k(s)\leq 1$ for all $s$,
we deduce
from the bounds on $\phi_H$ and $\phi_\infty$
that the above expression is bounded by
$$f(0)\int_0^t|x_k(s)-y_k(s)|ds+
2\d T+\sum_{i=0}^N |f(i)-1|\int_0^t|x_i(s)-y_i(s)|ds\,.
$$
Let $C$ be the maximum of the $|f(i)-1|$ for $1\leq i\leq N$,
it follows that
\begin{multline*}
|x_k(t)-y_k(t)|\,\leq\,
\d\big(
1+2(f(0)+1)T
\big)+\\
\sum_{i=0}^k f(0)M_\infty(i,k)\int_0^t
|x_i(s)-y_i(s)|ds+\\
f(0)\int_0^t|x_k(s)-y_k(s)|ds
+C\sum_{i=0}^N\int_0^t|x_i(s)-y_i(s)|ds\,.
\end{multline*}
We sum for $0\leq k\leq n$ and we get
\begin{multline*}
\sum_{k=0}^n |x_k(t)-y_k(t)|\,\leq\,
\d (n+1)\big(
1+2(f(0)+1)T
\big)+\\
\sum_{i=0}^n f(0)\Bigg(\sum_{k=i}^n M_\infty(i,k)\Bigg)
\int_0^t
|x_i(s)-y_i(s)|ds+\\
\sum_{k=0}^n f(0)\int_0^t|x_k(s)-y_k(s)|ds
+
(n+1)C\sum_{i=0}^n\int_0^t|x_i(s)-y_i(s)|ds\,.
\end{multline*}
We deduce that
$$\sum_{k=0}^n|x_k(t)-y_k(t)|\,\leq\,\d C_1
+C_2\int_0^t
\Bigg(\sum_{k=0}^n\big|x_k(s)-y_k(s)\big|
\Bigg)ds\,,$$
where $C_1,C_2$ positive constants that do not depend on 
$\ell$ or $q$.
We conclude thanks to Gronwall's lemma,
by choosing $\d<\e C_1^{-1}e^{-C_2 T}$.
%\end{proof}

\section{Convergence of the stationary solution}\label{Convstat}
%\begin{proposition}\label{convpstat}
%Let $x^*=(x^*_k)_{0\leq k\leq\ell}$
%be the unique stationary solution of $(Eig)$.
%We have the following dichotomy:
%
%$\bullet$ If $f(0)\exa\leq 1$,
%$$\forall\, k\geq 0\,,\qquad \lim_{
%\genfrac{}{}{0pt}{1}{\ell\to\infty,\,
%q\to 0}
%{{\ell q} \to a}
%}
%\,x^*_k\,=\,0\,.$$
%
%$\bullet$ If $f(0)\exa> 1$,
%$$\forall\, k\geq 0\,,\qquad \lim_{
%\genfrac{}{}{0pt}{1}{\ell\to\infty,\,
%q\to 0}
%{{\ell q} \to a}
%}
%\,x^*_k\,=\,\rho^0_k
%\,,$$
%where $(\rho^0_k)_{k\geq0}$ 
%is the unique stationary solution of $(Eig_\infty)$
%satisfying $\rho^0_0>0$.
%\end{proposition}
Finally, 
we proceed to the proof of theorem~\ref{convpstat}.
%\begin{proof}
Let us recall that the matrix
$(W(i,j),0\leq i,j\leq\ell)$
is defined by
$$\forall\, i,j\in\zl\,,\qquad
W(i,j)\,=\,f(i)M_H(i,j)\,.$$
The vector $x^*$ solves the equation
$$\phi_H x_k^*\,=\,
\sum_{i=0}^\ell x_i^* W(i,k)\,,\qquad
0\leq k\leq\ell\,,$$
where
$$\phi_H\,=\,
\sum_{i=0}^\ell x^*_i f(i)\,,$$
is also the Perron--Frobenius eigenvalue of $W$.
In particular,
$\phi_H\in\,]0,f(0)[\,$.
Up to the extraction of a subsequence,
we can suppose the existence of the limits
$$\phi_\infty\,=\,
\lim_{\genfrac{}{}{0pt}{1}{\ell\to\infty,\,q\to0}{\ell q\to a}}\,
\phi_H\,,\qquad
y^*_k\,=\,
\lim_{\genfrac{}{}{0pt}{1}{\ell\to\infty,\,q\to0}{\ell q\to a}}\,
x^*_k\,,\quad k\geq 0\,.$$
%D'après la première équation du système $\phi u^T=u^T W$,
%nous avons que
%$$\s u_0 M_H(0,0)\,<\,
%\phi u_0\,<\,
%\s u_0 M_H(0,0)+
%\max_{1\leq i\leq\ell}M_H(i,0)\,.$$
%Puisque $\phi>1$,
%nous concluons que $\phi^*\geq\max\lbrace\,1,\s\exa\,\rbrace$.
%Remarquons que
%$$\phi\,=\,(\s-1)u_0+1\,.$$
%En prenant des limites dans
%les deux formules précédentes,
%$$\phi^*\,=\,(\s-1)u^*_0+1\,,\qquad
%\phi^* u^*_0\,=\,\s u^*_0\exa\,.$$
%Puisque $\phi^*\geq\max\lbrace\,1,\s\exa\,\rbrace$,
%nous concluons que:
%
%$\bullet$ Si $\s\exa\leq1$, alors $\phi^*=1$ et $u^*_0=0$.
%
%$\bullet$ Si $\s\exa>1$, alors 
%$$\phi^*\,=\,\s\exa\qquad
%\text{et}\qquad
%u^*_0\,=\,\frac{\s\exa-1}{\s-1}\,.$$
Writing down the $k$-th equation of the system $(x^*)^T\phi_H =(x^*)^T W$,
we conclude that
%\begin{multline*}
$$
%\s x^*_0M_H(0,k)+
\sum_{i=0}^k x^*_i f(i) M_H(i,k)
\,<\,\phi_H x^*_k\,<\,\\
%\s x^*_0M_H(0,k)+
\sum_{i=0}^k x^*_i f(i) M_H(i,k)
+f(0)\max_{k<i\leq\ell}M_H(i,k)\,.
$$
%\end{multline*}
In particular, 
if we take the left inequality with $k=0$, 
and if we divide both sides by $x^*_0$,
we get, passing to the limit,
that $\phi_\infty\geq f(0)\exa$.
Passing to the limit in the above inequalities,
we obtain the system of equations
$$\phi_\infty y^*_k\,=\,
%\s y^*_0\exa\frac{a^k}{k!}
\sum_{i=0}^k y^*_i f(i) \exa\frac{a^{k-i}}{(k-i)!}\,,\qquad
k\geq0\,.$$
The zeroth equation reads $\phi_\infty y_0^*=y_0^*f(0)\exa$.
Since the sum of the components of the 
vector $(x^*_k)_{k\geq0}$ 
is equal to 1,
the sequence $(y_k^*)_{k\geq0}$
satisfies
$$\sum_{k\geq0}y_k^*\,\leq\,1\,.$$
We know from~\cite{CDClassdep}
that this system of equations
only admits the solution $y_k^*=0$ when $f(0)\exa<1$.
On the other hand, if $f(0)\exa>1$, 
we see that necessarily $y_0^*>0$: 
indeed,
if $K$ is the first index $k\geq 0$ 
such that $y_k^*>0$,
it follows from a passage to the limit in
the above inequalities that
$\phi_\infty\leq f(K)\exa$.
In view of the constraints 
$\phi_\infty\geq f(0)\exa$
and $f(0)>f(k)$ for all $k\geq 1$, we 
deduce that $K$ must be equal to $0$.
Likewise, if $y_k^*=0$ for every $k\geq0$,
then taking $N$ large enough so that,
for all $n\geq N$, $|f(n)-1|<\e$,
it follows that
$$\phi_H\,=\,
\sum_{k=0}^\ell x^*_k f(k)
\,\leq\,\sum_{k=0}^N x^*_k f(k)
+(1+\e)\,.$$
We deduce from here that $\phi_\infty\leq 1+\e$,
which, for $\e$ small enough is in 
contradiction with the fact that $\phi_\infty\geq f(0)\exa$.
Therefore, $y_0^*>0$ and $\phi_\infty=f(0)\exa$.
Summing over $k\geq0$ 
in the above system of equations,
we see that
$$\phi_\infty\sum_{k\geq0}y_k^*\,=\,
\sum_{i\geq0}y_i^*f(i)\,=\,\phi_\infty\,.$$
We conclude that the components of 
$(y_k^*)_{k\geq0}$ add up to 1,
thus, $(y_k^*)_{k\geq0}$
must be equal to $\rho^0$.
%
%et que la relation de récurrence  
%sauf si le 
%Nous concluons que si $\s\exa\leq1$, 
%alors $u^*_k=0$ pour tout $k\geq0$,
%et si $\s\exa>1$, alors
%$u^*_k=x^*_k$, pour tout $k\geq0$.
%\end{proof}
\section*{Acknowledgements}
This work was supported by a public grant as part of the
Investissement d'avenir project, reference ANR-11-LABX-0056-LMH,
LabEx LMH. The author wishes to thank Michel Bena\"im and Rapha\"el Cerf 
for the valuables discussions that contributed to improve the article.

\bibliography{qs}
\bibliographystyle{plain}

\appendix

\section{Lemmas on linear ODEs}
\label{LemmesEdo}
We give here some lemmas concerning linear ODEs,
and specially their long time behavior.
\begin{lemma}\label{posilemma}
Let $\a:[0,+\infty[\,\lra[0,+\infty[\,$
and $\b:[0,+\infty[\,\lra\R$
be Lipschitz functions and let $(z(t),t\geq0)$
be the solution of the differential equation 
$$z'(t)\,=\,\a(t)+\b(t)z(t)\,.$$
If $z(0)\geq0$ then $z(t)\geq0$ for all $t\geq 0$.
\end{lemma}
\begin{proof}
The trajectory $(z(t),t\geq0)$ is continuous.
If there exists $t^*\geq0$ such that $z(t^*)=0$, then
$$z'(t^*)\,=\,\a(t^*)\,\geq\,0\,,$$
and thus $z(t)\geq 0$ for all $t\geq0$.
\end{proof}
\begin{lemma}\label{compedolin}
Let $\a,\widetilde{\a}:[0,+\infty[\,\lra[0,+\infty[\,$
and $\b,\widetilde{\b}:[0,+\infty[\,\lra\R$ 
be Lipschitz functions satisfying
$$\forall\,t\geq0\,,\qquad
\a(t)\,\leq\,\widetilde{\a}(t)\,,\qquad
\b(t)\,\leq\,\widetilde{\b}(t)\,.$$
Let $(y(t),t\geq0)$ and $(z(t),t\geq0)$
be the solutions of the ODEs
$$y'(t)\,=\,\a(t)+\b(t)y(t)\,,\qquad
z'(t)\,=\,\widetilde{\a}(t)+\widetilde{\b}(t)z(t)\,.$$
If $z(0)\geq y(0)\geq 0$ then $z(t)\geq y(t)$ 
for all $t\geq 0$.
\end{lemma}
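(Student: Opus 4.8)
The plan is to reduce the statement to lemma~\ref{posilemma} by analyzing the difference $w(t)=z(t)-y(t)$, for which I must exhibit a linear differential equation of the form $w'(t)=A(t)+\widetilde{\b}(t)w(t)$ with a nonnegative, Lipschitz source term $A$. Before doing so, I would first record that $y$ itself stays nonnegative: since $\a\geq0$, the function $\b$ is Lipschitz, and $y(0)\geq0$, lemma~\ref{posilemma} applied directly to $y$ gives $y(t)\geq0$ for all $t\geq0$. This positivity is exactly the ingredient that will force the source term to have the correct sign.

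Next I would compute the equation satisfied by $w$. Differentiating and substituting the two ODEs,
$$w'(t)\,=\,\big(\widetilde{\a}(t)-\a(t)\big)+\widetilde{\b}(t)z(t)-\b(t)y(t)\,.$$
Adding and subtracting $\widetilde{\b}(t)y(t)$ rearranges this into
$$w'(t)\,=\,A(t)+\widetilde{\b}(t)w(t)\,,\qquad
A(t)\,=\,\big(\widetilde{\a}(t)-\a(t)\big)+\big(\widetilde{\b}(t)-\b(t)\big)y(t)\,.$$

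It then remains to verify the hypotheses of lemma~\ref{posilemma} for $w$. By assumption $\widetilde{\a}-\a\geq0$ and $\widetilde{\b}-\b\geq0$, and I have just shown $y\geq0$, so $A(t)\geq0$ for all $t\geq0$; moreover $A$ is Lipschitz, being built from the Lipschitz functions $\a,\widetilde{\a},\b,\widetilde{\b}$ and from $y$, which is $C^1$ and hence Lipschitz on bounded intervals. The coefficient $\widetilde{\b}$ is Lipschitz by hypothesis, and $w(0)=z(0)-y(0)\geq0$. Lemma~\ref{posilemma} then yields $w(t)\geq0$ for all $t\geq0$, that is $z(t)\geq y(t)$.

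The only genuinely delicate point is the sign of the source term $A$: the inequality $\b\leq\widetilde{\b}$ is only useful once multiplied by a nonnegative quantity, which is precisely why the preliminary step establishing $y\geq0$ cannot be skipped and must be carried out before the comparison itself. Everything else is a routine verification that the comparison collapses onto the positivity lemma.
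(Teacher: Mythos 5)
Your proof is correct and takes essentially the same route as the paper: write a linear ODE for the difference $w=z-y$, use Lemma~\ref{posilemma} once to get non-negativity of one of the two solutions (hence non-negativity of the source term), then apply Lemma~\ref{posilemma} again to $w$. The only, immaterial, difference is that you decompose around $y$ (source $(\widetilde{\a}-\a)+(\widetilde{\b}-\b)y$, coefficient $\widetilde{\b}$, needing $y\geq0$) while the paper decomposes around $z$; your displayed identity is in fact the algebraically consistent one, since the paper's own formula mixes the two decompositions ($(\widetilde{\b}-\b)z$ paired with $\widetilde{\b}(z-y)$ does not expand to $\widetilde{\b}z-\b y$).
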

\begin{proof}
We have
$$z'(t)-y'(t)\,=\,
\widetilde{\a}(t)-\a(t)+(\widetilde{\b}(t)-\b(t))z(t)
+\widetilde{\b}(t)(z(t)-y(t))\,.$$
From the previous lemma, $z(t)\geq0$ for all $t\geq 0$.
Thus, applying the previous lemma once again,
$z(t)-y(t)\geq0$ for all $t\geq0$.
\end{proof}
\begin{lemma}
Let $\a,\b:[0,+\infty[\,\lra[0,+\infty[\,$ 
be Lipschitz functions, and suppose that there exist
$\a^*,\b^*\in\,]0,+\infty[\,$ such that
$$\lim_{t\to\infty}\a(t)\,=\,\a^*\,,\qquad
\lim_{t\to\infty}\b(t)\,=\,\b^*\,.$$
Let $(y(t),t\geq0)$
be the solution of the differential equation
$$y'(t)\,=\,\a(t)-\b(t)y(t)\,.$$
Then, for every initial condition $y(0)\in\R$,
$$\lim_{t\to\infty}y(t)\,=\,\frac{\a^*}{\b^*}\,.$$
\end{lemma}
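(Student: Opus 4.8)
The plan is to remove the target value by an affine change of variables, write down the explicit (variation-of-constants) solution of the resulting equation, and then show that the forcing term, which now tends to $0$, produces a convolution integral that vanishes. First I would set $v(t)=y(t)-\a^*/\b^*$. A direct computation gives
$$v'(t)\,=\,g(t)-\b(t)v(t)\,,\qquad
g(t)\,=\,\a(t)-\frac{\a^*}{\b^*}\,\b(t)\,,$$
and since $\a(t)\to\a^*$ and $\b(t)\to\b^*$ we have $g(t)\to\a^*-\a^*=0$. It therefore suffices to prove that the solution of $v'=g-\b v$, with $g(t)\to0$, $\b(t)\to\b^*>0$, and arbitrary $v(0)\in\R$, satisfies $v(t)\to0$.

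Next I would use the explicit solution. Setting $B(t)=\int_0^t\b(u)\,du$, the variation-of-constants formula reads
$$v(t)\,=\,e^{-B(t)}v(0)+e^{-B(t)}\int_0^t e^{B(s)}g(s)\,ds\,.$$
Because $\b(t)\to\b^*>0$, there is $T_0$ with $\b(s)\geq\b^*/2$ for $s\geq T_0$, whence $B(t)\geq B(T_0)+\tfrac{\b^*}{2}(t-T_0)\to\infty$; in particular the first term $e^{-B(t)}v(0)$ tends to $0$.

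The main work is the integral term. Fixing $\e>0$, I would choose $T\geq T_0$ with $|g(s)|<\e$ for all $s\geq T$ and split the integral at $T$. The contribution of $[0,T]$ equals $e^{-B(t)}$ times a fixed constant, so it vanishes as $t\to\infty$. For the tail, the lower bound $\b(s)\geq\b^*/2$ gives
$$\int_T^t e^{B(s)}\,ds\,\leq\,\frac{2}{\b^*}\int_T^t\b(s)e^{B(s)}\,ds\,=\,\frac{2}{\b^*}\big(e^{B(t)}-e^{B(T)}\big)\,\leq\,\frac{2}{\b^*}\,e^{B(t)}\,,$$
so that $\big|e^{-B(t)}\int_T^t e^{B(s)}g(s)\,ds\big|\leq 2\e/\b^*$. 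Combining the two pieces yields $\limsup_{t\to\infty}|v(t)|\leq 2\e/\b^*$, and letting $\e\to0$ gives $v(t)\to0$, that is, $y(t)\to\a^*/\b^*$.

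The only delicate point is this last estimate: one must exploit that the exponential weight $e^{B(s)}$ concentrates near $s=t$—quantified precisely by $\b\geq\b^*/2$—so that the weighted average of $g$ is controlled by the values of $g$ near $t$, which are small. An alternative in the spirit of lemmas~\ref{posilemma} and~\ref{compedolin} would be to sandwich $y$, for $t$ large, between the solutions of the constant-coefficient equations $z'=(\a^*+\e)-(\b^*-\e)z$ and $z'=(\a^*-\e)-(\b^*+\e)z$, whose limits $\tfrac{\a^*\pm\e}{\b^*\mp\e}$ converge to $\a^*/\b^*$ as $\e\to0$; this works as well, but it forces one to track the sign of $y$ in the product $\b(t)y(t)$, a complication that the explicit formula avoids entirely.
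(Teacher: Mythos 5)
Your proof is correct, and it takes a genuinely different route from the paper's. The paper proves the lemma by a sandwich argument: for $\varepsilon>0$ and $T$ large enough that $|\alpha(t)-\alpha^*|<\varepsilon$ and $|\beta(t)-\beta^*|<\varepsilon$ on $[T,\infty[$, it compares $y(T+\cdot)$, via lemma~\ref{compedolin}, with the solutions of the constant--coefficient equations $\underline{y}'=(\alpha^*-\varepsilon)-(\beta^*+\varepsilon)\underline{y}$ and $\overline{y}'=(\alpha^*+\varepsilon)-(\beta^*-\varepsilon)\overline{y}$ started at $y(T)$, deduces $\frac{\alpha^*-\varepsilon}{\beta^*+\varepsilon}\leq\liminf_{t\to\infty}y(t)\leq\limsup_{t\to\infty}y(t)\leq\frac{\alpha^*+\varepsilon}{\beta^*-\varepsilon}$, and sends $\varepsilon\to0$ --- precisely the alternative you sketch in your last paragraph. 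Your argument instead centers the equation at $\alpha^*/\beta^*$, writes the explicit variation--of--constants solution, and controls the convolution integral by splitting at a time beyond which the forcing $g$ is smaller than $\varepsilon$; the key estimate $\int_T^t e^{B(s)}\,ds\leq\frac{2}{\beta^*}e^{B(t)}$ is correct, as are the remaining steps, so the whole proof goes through. As for what each approach buys: the paper's route is shorter given that lemmas~\ref{posilemma} and~\ref{compedolin} are already established in the appendix (and needed elsewhere), whereas yours is self--contained and quantitative, giving the explicit bound $\limsup_{t\to\infty}|y(t)-\alpha^*/\beta^*|\leq 2\varepsilon/\beta^*$. Your closing remark is also accurate: lemma~\ref{compedolin} is stated for nonnegative initial conditions (its proof needs $z(t)\geq0$ to control the term $(\widetilde{\beta}(t)-\beta(t))z(t)$), so for an arbitrary $y(0)\in\R$ the sandwich argument strictly speaking requires the extra observation that $y$ becomes nonnegative in finite time (while $y\leq0$ one has $y'\geq\alpha(t)\geq\alpha^*/2>0$ for $t$ large), a point the paper's proof passes over silently; your explicit formula handles arbitrary real initial data with no additional care.
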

\begin{proof}
Let $\e>0$ be small enough so that $\a^*-\e,\b^*-\e>0$.
Let $T\geq0$ be large enough so that
$$\forall\,t\geq T\,,\qquad
|\a(t)-\a^*|\,<\,\e\,,\qquad
|\b(t)-\b^*|\,<\,\e\,.$$
Let $(\underline{y}(t),t\geq0)$ and
$(\overline{y}(t),t\geq0)$ 
be the solutions of the differential equations
$$\underline{y}'(t)\,=\,(\a^*-\e)-(\b^*+\e)\underline{y}'(t)\,,\qquad
\overline{y}'(t)\,=\,(\a^*+\e)-(\b^*-\e)\overline{y}'(t)\,,$$
with $\underline{y}(0)=
\overline{y}(0)=y(T)$.
From the previous lemma,
for all $t\geq0$,
$$\underline{y}(t)\,\leq\,
y(T+t)\,\leq\,
\overline{y}(t)\,.$$
Yet, $\underline{y}(t)$ and $\overline{y}(t)$ converge:
$$\lim_{t\to\infty}\underline{y}(t)\,=\,
\frac{\a^*-\e}{\b^*+\e}\,,\qquad
\lim_{t\to\infty}\overline{y}(t)\,=\,
\frac{\a^*+\e}{\b^*-\e}\,.$$
We conclude that
$$\frac{\a^*-\e}{\b^*+\e}\,\leq\,
\liminf_{t\to\infty} y(t)\,\leq\,
\limsup_{t\to\infty}y(t)\,\leq\,
\frac{\a^*+\e}{\b^*-\e}\,.$$
We send $\e$ to 0 and we obtain the desired result.
\end{proof}
%\begin{lemma}[Gronwall]\label{gronwall}
%Let $T>0$, and let $f(t)$, $t\in[0,T]$,
%be a positive function which satisfies the following inequality:
%there exist $C,\a>0$ such that
%$$\forall\,t\in[0,T]\,,\qquad
%f(t)\,\leq\,
%C+\a\int_0^t f(s)ds\,.$$
%Then for all $t\in[0,T]$,
%$$f(t)\,\leq\, Ce^{\a t}\,.$$
%\end{lemma}
%\begin{proof}
%In view of our assumptions,
%$$\a f(t)\bigg(
%C+\a\int_0^t f(s)ds
%\bigg)^{-1}\,\leq\,\a\,.$$
%We integrate on both sides and we get
%$$\ln\bigg(
%C+\a\int_0^t f(s)ds
%\bigg)-\ln C\,\leq\, \a t\,.$$
%It follows that
%$$f(t)\,\leq\,C+\a\int_0^t f(s)ds
%\,\leq\, Ce^{\a t}\,,$$
%as wanted.
%\end{proof}
\end{document}